\def\R{{\mathbb{R}}}
\def\pr{{\rm Pr}}
\def\Z{{\mathbb Z}}
\def\E{{\mathbb E}}
\def\G{{\mathcal G}}
\newtheorem{thm}{Theorem}
\newtheorem{lemma}[thm]{Lemma}
\definecolor{light-gray}{gray}{0.9}
\icmltitlerunning{Human Interaction and Interpretability Paper}
\begin{document}

\twocolumn[
\icmltitle{Interactive Topic Modeling with Anchor Words}


\icmlsetsymbol{equal}{*}

\begin{icmlauthorlist}
\icmlauthor{Sanjoy Dasgupta}{ucsd}
\icmlauthor{Stefanos Poulis}{ucsd,ntent}
\icmlauthor{Christopher Tosh}{columbia}
\end{icmlauthorlist}

\icmlaffiliation{ucsd}{University of California, San Diego}
\icmlaffiliation{columbia}{Columbia University}
\icmlaffiliation{ntent}{NTENT}
\icmlcorrespondingauthor{Stefanos Poulis}{spoulis@eng.ucsd.edu}

\icmlkeywords{interpretability, transparency}

\vskip 0.3in
]


\printAffiliationsAndNotice{}

\begin{abstract}
The formalism of {\it anchor words} has enabled the development of fast topic modeling algorithms with provable guarantees. In this paper, we introduce a protocol that allows users to interact with anchor words to build customized and interpretable topic models. Experimental evidence validating the usefulness of our approach is also presented. 
\end{abstract}

\section{Introduction}

Topic models~\citep{blei2002latent,GS04,li2006pachinko,blei2007correlated} are an unsupervised approach to modeling textual data. Given a corpus of documents, topic modeling seeks a small number of probability distributions over the vocabulary, called {topics}, so that each document is well-summarized as a mixture of topics. There are now several algorithms that assume data is generated by a collection of topics and aim to provably recover these topics~\citep{arora2012learning, arora2013practical}.

On the other hand, the natural interpretation of topics, that they represent the main themes of a corpus, has perhaps most motivated their use by practitioners~\citep{mimno2011optimizing,lee2014low,lund2017tandem}. Indeed, the most common way to summarize topics is with a short list of their most probable words, and topic models are judged according to how well these lists align with a user's intuition~\citep{chang2009reading}. In this sense, a user expects to interpret a topic model via a small collection of words.


Model fit and interpretability form two, sometimes opposing, objectives in topic modeling, and it can be difficult to strike a balance between the two. Consider the challenge of granularity: should there be different topics for `football,' `Olympics,' and `basketball' or is a single topic over `sports' sufficient? Obviously more topics will be able to describe the corpus more easily, but a particular user may not care to make the distinction between three sports-related topics. Clearly no unsupervised method can be expected to always make the correct choice here.

To deal with such inherent ambiguities, researchers have considered methods to inject user knowledge directly into topic modeling algorithms. The most common approach has been encoding positive and negative word correlations into prior distributions~\citep{andrzejewski2009incorporating, hu2014interactive, petterson2010word}. The idea is that by biasing models to group words a user knows should be together and separate words a user knows should remain apart, an algorithm can converge on a topic model that better reflects a user's preferences.

Although word correlation-based approaches lend themselves to clean probabilistic models, they require a large amount of feedback to converge on the topic model a user has in mind. Indeed, interaction that operates on the words of a vocabulary alone generally does not have the {leverage} to trigger the large changes in a topic model that a user may be hoping for. Moreover, if a user has in mind a specific deficiency in the topic model, for instance, if they want to get rid of a `low quality' topic, it may not be clear how to construct the proper word correlations to correct the issue.

A more recent approach to interactive topic modeling, which we again take up here, was proposed by \citet{lund2017tandem}. In their work, users inject knowledge and preferences into a model through anchor words -- words which only occur with significant probability in a single topic~\citep{arora2012learning,arora2013practical}. Because anchor words occur only in a single topic, users can treat them as proxies for entire topics, allowing large changes in a topic model with only a few interactions.

In this work, we present an intuitive interactive protocol wherein users are shown anchor words and are given the opportunity to group some of them, if they should belong to the same topic, while removing others that are uninteresting. A topic is then created for each group. Despite its simplicity, this interactive approach is surprisingly effective, enjoying advantages that purely unsupervised approaches lack. 



\section{Related work}\label{sec:related-work}

The observation that unsupervised learning objectives rarely align completely with a user's intentions is not a new one. Nor is the solution of introducing human feedback to mitigate this issue. The approaches that have been studied thus far can be generally broken into two categories: constraint-based and higher-order.

In constraint-based interactive learning, a structure is found by optimizing some cost function subject to certain constraints. 
In the context of topic modeling, constraint-based interaction has typically focused on probabilistic models where models that violate a constraint are either down-weighted or eliminated. Whether these constraints are introduced all at once~\citep{andrzejewski2009incorporating, petterson2010word} or in interactive rounds~\citep{hu2014interactive}, the focus of these methods has been on modifying the prior distribution over topics so that they favor certain word correlations.

Higher-order feedback seeks to effect large changes in a model by modifying aspects of the model directly. As such, the types of feedback considered are highly dependent on the task at hand. In clustering, for example, researchers have considered split and merge requests in which a user indicates that a certain cluster ought to be broken up into smaller clusters (a split request) or that several clusters should be grouped together into a single cluster (a merge request)
~\citep{balcan2008clustering, awasthi2014local}.

Perhaps the most convincing use of higher-order feedback in topic modeling is via anchor words. Because each anchor word has a unique topic associated with it, actions performed on anchor words have the potential to effect large changes in the topic model. ~\citet{lund2017tandem} proposed a protocol in which a user creates a group of words that they feel are representative of a topic and these words are aggregated into a single pseudo-anchor word. These pseudo-anchor words are then used to create a topic model in the same way that actual anchor words would be used.

The interactive protocol considered in this work is similar to that considered by~\citet{lund2017tandem} in its reliance on anchor words. However, our method differs considerably both in the types of words a user can interact with (we only allow a user to interact with geometrically-meaningful anchor words) and in the way we utilize the user-created groups (we sidestep the creation of pseudo-anchor words).

\begin{figure}[!ht]
\begin{enumerate}
	\item {\bf Compute normalized word co-occurrences.}
	Form the $V \times V$ matrix $\bar Q$, where $\bar Q_{ij} = \pr(w_2=j|w_1=i)$. The rows of $\bar Q$ lie in $\Delta_V$. 
	
	\item {\bf Identify the anchor words.}
	Find $K$ rows of $\bar Q$, say $s_1, \ldots, s_K$, such that the rest of the rows lie approximately in the convex hull of the $\bar{Q}_{s_i}$. These are the anchor words.
	
	\item {\bf Express all rows as convex combinations of anchor rows.}
	For each word $i$, find positive weights $C_{i,1}, \ldots, C_{i,K}$ summing to 1 such that $\bar{Q}_i \approx C_{i,1} \bar{Q}_{s_1} + \cdots + C_{i,K}\bar{Q}_{s_K}$. Then $C_{i,k} \approx \pr(z=k|w=i)$.
    \item {\bf Recover the topic distribution.}
    By Bayes' rule: $A_{i,k} = \pr(w=i|z=k) \propto C_{i,k} \pr(w=i)$.
\end{enumerate}
\caption{The generic anchor words algorithm.}
\label{alg:arora}
\end{figure}

\section{Preliminaries}\label{sec:preliminaries}

A \emph{corpus} is a collection of documents $d_1, \ldots, d_m$, each of which is represented in the bag-of-words representation as a vector in $\Z_{+}^V$, where $V$ is the size of the vocabulary. A \emph{word-topic matrix} is a $V \times K$ matrix $A$ such that each column $A_i$ corresponds to a topic and is represented as an element of $\Delta_V$, the $V$-dimensional probability simplex. 

Given a word-topic matrix $A$ and a prior distribution $\tau \in \Delta_K$, the generative model for a corpus is given as
\begin{itemize}
	\item For each document $d = 1, 2, \ldots$:
    \begin{itemize}
    	\item Draw a topic distribution $p_d \sim \tau$
        \item For word $i$ in document $d$, draw its topic $z_i \sim p_d$ and then draw the vocabulary word $w_i \sim A_{z_i}$.
    \end{itemize}
\end{itemize}

Together, the matrix $A$ and distribution $\tau$ induce a word co-occurrence matrix $Q \in \R^{V \times V}$ and topic co-occurrence matrix $R \in \R^{K \times K}$ satisfying
\begin{align*}
Q_{i,j} &= \pr(w_1 = i, w_2 = j) \ \ \text{ and}\\
R_{k, k'} &= \pr(z_1 = k, z_2 = k')
\end{align*}
for a randomly generated document with words $w_1$ and $w_2$ with associated topics $z_1$ and $z_2$.

We say that a word $i$ is an \emph{anchor word} for topic $k$ if $A_{i,k} \gg 0$ and $A_{i,k'} = 0$ for all other topics $k' \neq k$. Further, we say that the topic matrix is separable if each topic $k$ has an associated anchor word $s_k$.
 
Given such a corpus, several algorithms have been designed to provably recover the anchor words of a topic model and the topics associated with them \citep{arora2012learning, recht2012factoring, arora2013practical}. The general approach is given in Figure~\ref{alg:arora}. In this work, we will assume that we have access to such procedures and their subroutines.

\section{An anchor word based interactive protocol}\label{sec:interactive-protocol}

As pointed out above, there are many difficulties associated with topic modeling as a purely unsupervised task. These include the identification of the correct number of topics, filtering out noise, and dealing with the inherent ambiguities of language. Moreover, different users may have different desiderata in a topic model that may not be possible to satisfy simultaneously. 

To address these issues, several methods have been considered for injecting human knowledge into topic modeling. The approach with the closest resemblance to our own is the recently proposed \emph{anchor facet} approach \citep{lund2017tandem}. In this method, a user synthesizes pseudo-anchor words by averaging together subsets of words the user chooses. As we will see, these pseudo-anchors disregard the underlying geometry of the data in ways that can lead to problems in topic recovery.

The remainder of this section is organized as follows. We first give an example where the anchor facet approach leads to identifiability issues. Next, we present a generative model for which standard unsupervised techniques cannot recover the desired topics, even in the infinite data limit. Finally, we present our interactive protocol which can, in fact, find good estimates of the desired topics.

\subsection{An anchor facet problem}

In the anchor facet model, a user chooses a set of words $\G$ from the vocabulary that they feel should represent a topic. For instance, they might choose {\tt games} and {\tt computer} to indicate a `computer games' topic. The corresponding word co-occurrence vectors (rows of $\bar{Q}$) are then aggregated to form a \emph{pseudo-anchor} $g$, by taking their harmonic mean (among other options), and this $g$ is added to the set of anchor words. After the user has created the pseudo-anchors, a topic model is recovered using steps 3-4 of Figure~\ref{alg:arora}.

This approach is intuitively appealing but hard to justify geometrically. The correctness of the anchor words algorithm depends on the anchors being at the corners of the simplex containing all the word vectors. Pseudo-anchors violate this in two ways: (1) they don't have a clear meaning in terms of co-occurrence probabilities (if, as suggested, the harmonic mean is used for aggregation) and (2) they may well lie near the center of the simplex. For instance, it could easily happen that a large fraction of the remaining words are not well-approximated as convex combinations of pseudo-anchors; in which case, these words will be assigned to topics in a fairly arbitrary manner.

\begin{figure}
    \begin{center}
	\includegraphics[scale=0.3]{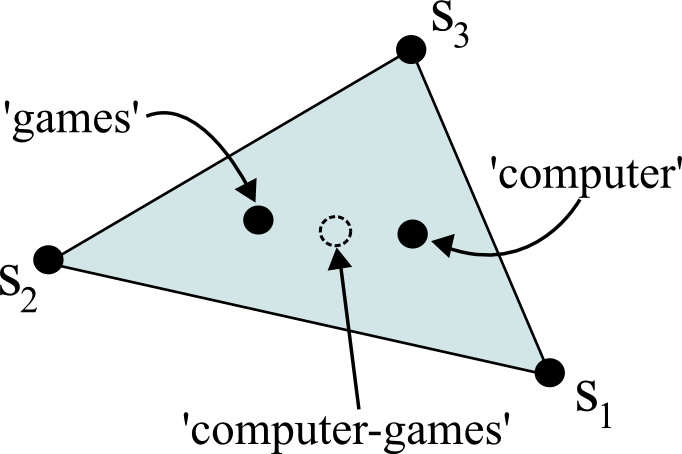}
	\end{center}
    \caption{Illustration of anchor facet shortcoming. Here the user combines anchor words `computer' and 'games' which results in a point `computer-games' somewhere in the middle of the simplex spanned by $s_1$, $s_2$ and $s_3$. \label{fig:anchor facets}}
\end{figure}

\subsection{A subtopic view of document generation}\label{sec:subtopic}

The topic modeling view of data generation, that a corpus is generated by a relatively small number of topics that are easily interpretable by a human, is often an oversimplification. In reality, documents on similar subjects can vary wildly in their choice of language due to authorship, the times they were written, etc. A topic model that accurately fits a real corpus must necessarily contain many topics.

To see this, imagine a corpus of news documents collected over the course of a year, in which a small but significant percentage of articles deal with weather. A user wishing to analyze this corpus via topic modeling might be satisfied with a single weather topic. However, the corpus itself will not look like it only has a single weather topic. Indeed, the distribution of words in a weather article written in September during hurricane season will look significantly different from the distribution of words in a weather article written in January during blizzard season, which in turn will differ from the distribution of words in a weather article written in July during drought season. 

Thus, accurately modeling the weather-related aspects of the corpus requires several topics. And that is just the weather! Conceivably, other aspects of the corpus for which a user might imagine a single topic sufficing can in turn be broken into components that actually model the data. On the other hand, a model with hundreds or thousands overlapping and highly correlated topics is not easy to work with. Many users would prefer a significantly simpler model that may not perfectly describe the data but summarizes the core subjects well.

To model the mismatch between the idealized view a user has in mind and the actual data generating process, we introduce the \emph{subtopics view} of data generation. It is described by the following generative model.

\begin{itemize}
	\item There are several `ideal' topics $M_1, \ldots, M_K$ along with some topic-topic co-occurrence matrix $R^{M} \in \R^{K \times K}$.
    \item For each ideal topic $M_k$, some number of `subtopics' indexed by the set $\G_k$ are drawn i.i.d. from a distribution satisfying $\E[A_t] = M_k$ for each $t \in \G_k$.
    \item The corpus is generated according to the new topic matrix $A$ and some topic-topic co-occurrence matrix $R^{A}$ satisfying $\sum_{t \in \G_k} \sum_{t' \in \G_k'} R^A_{tt'} = R^M_{kk'}$.
\end{itemize}

Here we call the topic model induced by $M$ and $R^M$ as the \emph{idealized model} and the topic model induced by $A$ and $R^A$ the \emph{subtopics model}. Intuitively, the idealized model is the model that would have generated the corpus in an ideal world. However, the corpus is actually generated by the subtopics model with a larger number of more specific topics. The following lemma, whose proof is omitted for space, demonstrates that the co-occurrence matrix induced by a subtopic model is intrinsically biased away from the idealized model in expectation.

\begin{lemma}
If $Q^M$ is the co-occurrence matrix induced by the idealized model and $Q^A$ is the co-occurrence matrix induced by the subtopics model, then \[ \E_A[Q^A] = Q^M + \sum_{k}R^M_{k,k} \Sigma^{(k)} \]
where $\Sigma^{(k)}$ is the covariance matrix of the subtopic distributions generated under ideal topic $k$.
\end{lemma}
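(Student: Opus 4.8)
The plan is to start from the explicit form of the subtopic co-occurrence matrix. Exactly as in the (omitted) computation behind the naive identity $\E_A[Q^A]=Q^M$, I would first fix words $i,j$ and a realization of the subtopic matrix $A$ and write
\begin{equation*}
Q^A_{i,j} = \sum_{t,t'} R^A_{t,t'}\, A_{i,t} A_{j,t'},
\end{equation*}
where $t,t'$ range over all subtopics. The entire content of the lemma lies in how the expectation over the i.i.d.\ subtopic draws interacts with this quadratic expression, so I would apply $\E_A[\cdot]$ and focus on the single scalar $\E_A[A_{i,t}A_{j,t'}]$. Writing $\kappa(t)$ for the ideal topic whose group $\G_{\kappa(t)}$ contains $t$, the key observation---and the exact point where the naive argument is too coarse---is that distinct subtopics are drawn independently, so for $t\neq t'$ we have $\E_A[A_{i,t}A_{j,t'}] = M_{i,\kappa(t)} M_{j,\kappa(t')}$, whereas for $t=t'$ the second moment carries an extra covariance contribution, $\E_A[A_{i,t}A_{j,t}] = M_{i,\kappa(t)}M_{j,\kappa(t)} + \Sigma^{(\kappa(t))}_{i,j}$.

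This suggests writing the second moment uniformly as
\begin{equation*}
\E_A[A_{i,t}A_{j,t'}] = M_{i,\kappa(t)}M_{j,\kappa(t')} + \ind[t=t']\,\Sigma^{(\kappa(t))}_{i,j},
\end{equation*}
and substituting it back, which splits $\E_A[Q^A_{i,j}]$ into a ``mean'' part and a ``covariance'' part. For the mean part I would group the subtopics by their ideal topic, pull $M_{i,k}M_{j,k'}$ outside the inner sums, and then invoke the model constraint $\sum_{t\in\G_k}\sum_{t'\in\G_{k'}} R^A_{t,t'} = R^M_{k,k'}$; this collapses the mean part to $\sum_{k,k'} R^M_{k,k'} M_{i,k} M_{j,k'}$, which is precisely $Q^M_{i,j}$. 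The covariance part, after the same regrouping, reduces to $\sum_k \big(\sum_{t\in\G_k} R^A_{t,t}\big)\,\Sigma^{(k)}_{i,j}$, since the indicator annihilates every off-diagonal pair.

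The step I expect to be the crux is matching the coefficient of $\Sigma^{(k)}$ to $R^M_{k,k}$: the diagonal sum $\sum_{t\in\G_k} R^A_{t,t}$ that emerges from the indicator must be identified with the full double sum $\sum_{t,t'\in\G_k} R^A_{t,t'} = R^M_{k,k}$ guaranteed by the constraint. This identification requires that the cross-subtopic co-occurrences $R^A_{t,t'}$ with $t\neq t'$ in the same group contribute nothing to the covariance term, which is exactly the structure of the subtopics model---within a document an ideal topic is realized through a single subtopic, so the diagonal mass already accounts for all of $R^M_{k,k}$. Collecting the two parts over all $i,j$ then yields $\E_A[Q^A] = Q^M + \sum_k R^M_{k,k}\Sigma^{(k)}$. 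Everything outside this identification is routine bookkeeping; the only genuine subtlety is the diagonal-versus-off-diagonal distinction, which is precisely what separates this corrected statement from the naive $\E_A[Q^A]=Q^M$.
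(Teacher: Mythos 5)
Your decomposition is the natural (and surely intended) one: expand $Q^A_{i,j}=\sum_{t,t'}R^A_{t,t'}A_{i,t}A_{j,t'}$, take expectations, and observe that independence factorizes the off-diagonal second moments while the diagonal terms $t=t'$ pick up the extra $\Sigma^{(\kappa(t))}_{i,j}$. The paper omits its proof of this lemma, but an earlier (retracted) version of the argument in the source performs exactly this expansion and erroneously factorizes \emph{all} pairs, which is precisely the mistake your $\ind[t=t']$ correction repairs; so your route is the right one and the mean part collapsing to $Q^M_{i,j}$ via $\sum_{t\in\G_k}\sum_{t'\in\G_{k'}}R^A_{t,t'}=R^M_{k,k'}$ is airtight.

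The one place you are leaning on something the paper does not actually state is the step you yourself flag as the crux. Your computation yields the coefficient $\sum_{t\in\G_k}R^A_{t,t}$ on $\Sigma^{(k)}$, and the stated model constraint only controls the full within-group sum $\sum_{t,t'\in\G_k}R^A_{t,t'}=R^M_{k,k}$. Since the entries of $R^A$ are nonnegative probabilities, equating the two forces $R^A_{t,t'}=0$ for all distinct $t,t'\in\G_k$, i.e., no document may co-occur two different subtopics of the same ideal topic. That is consistent with the paper's motivating narrative (a given weather article is a hurricane article \emph{or} a blizzard article), but it is an additional structural assumption, not a consequence of the generative model as formally written in Section 4.2. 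As it stands, the honest conclusion of your argument is $\E_A[Q^A]=Q^M+\sum_k\bigl(\sum_{t\in\G_k}R^A_{t,t}\bigr)\Sigma^{(k)}$, which matches the lemma only under that extra hypothesis (or, more weakly, gives a bias term bounded by $R^M_{k,k}\Sigma^{(k)}$ entrywise when $\Sigma^{(k)}$ is replaced by its positive-semidefinite structure). You should either state the within-group diagonality of $R^A$ as an assumption or present the coefficient in the weaker form; everything else in your write-up is correct.
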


Thus, in general the co-occurrence matrix generated by the subtopics model is biased away from the co-occurrence matrix that would be generated by the idealized model. 
Thus, directly fitting a topic model based on these statistics should not in general recover the ideal topics. Rather, some other approach is needed.



\subsection{An interactive protocol}\label{sec:protocol}

How do we recover the idealized topics $M$? Returning to our weather example, we could start by fitting a model with say, 500 topics. Next, we could ask a user to peruse these, form a group of some good weather subtopics e.g. hurricane, blizzard, drought, etc., and then average subtopics in the group to get an estimate of an ideal weather topic. But the way the topics are displayed presents a challenge: perusing 500 topics and finding their salient groupings might place an overwhelming cognitive load on a user. Indeed, even if each subtopic is uniquely identified by its top 10 words (which often is not the case), a prospective user would have to wade through 5000 words! What is needed, then, is a way to ensure we have a unique representation for each topic and to present these to the user as succinctly as possible.

Our approach is to utilize anchor words. Assuming each subtopic is associated with an anchor word, we find an `overcomplete' list of anchor words $s_1,\dots,s_T$ and present these to a user as proxies for entire topics. The user can quickly sort through this list and easily identify subtopics by their component anchor words. After a few rounds, the user will form $K$ groupings of selected anchor word indices $\widehat{\G}_1, \ldots, \widehat{\G}_K \subset \{1, \ldots, T\}$. 
      



Given a corpus generated by the subtopic model, our interactive protocol for estimating the $M_k$'s is relatively simple and it is given in Figure~\ref{alg:averaging}. It is not hard to see that if our estimates of the subtopics $A$ are unbiased and the user correctly identifies each true subtopic group $\G_k$, then each estimate $\widehat{M}_k$ will be close to the ideal topic $M_k$.

Issues arise when, due to undersampling, the set of candidate anchor words contains words that are not true anchor words. These `spurious' anchor words disrupt our ability to estimate the subtopics, leading to errors in our estimates of the ideal topics. To counter this issue, we consider an alternative procedure that replaces step (c) with the following:
\begin{itemize}
    \item[(c')] Using only the anchor words selected by the user, estimate the topic vectors $\widehat A_{j}$ for each $j \in \G_{1} \cup \cdots \cup \G_{K}$ by running a topic recovery algorithm.
\end{itemize}

We call the algorithm that uses step (c') \emph{partial interactive recovery} to distinguish it from the \emph{full interactive recovery} algorithm that uses step (c).

\begin{figure}[ht]
\framebox[3.25in]{
\begin{minipage}[t]{3in}
\vskip.05in
\begin{itemize}
	\item[(a)] Identify the `candidate' anchor words $s_1,\dots,s_T$ via a standard anchor-finding algorithm.	
    \item[(b)] Present these to the user and receive $K$ groupings of selected anchor word indices $\widehat{\G}_1, \ldots, \widehat{\G}_K \subset \{1, \ldots, T\}$. 
    \item[(c)] Using all the anchor words, estimate the topic vectors $\widehat A_{1}, \ldots, \widehat{A}_T$ by running a topic recovery algorithm.
    \item[(d)] For each group $\widehat\G_k$, average the associated topic vectors $\widehat{M}_k = \frac{1}{|\widehat{\G}_k|} \sum_{j \in \widehat{\G}_k} \widehat{A}_j$.
\end{itemize}
\vskip.05in
\end{minipage}}
\caption{Full interactive recovery algorithm}
\label{alg:averaging}
\end{figure}

\section{Document representation experiment}\label{sec:experiments}

To compare the quality of the topic models produced by the various algorithms, we conducted an experiment on the inferred document representations produced by these models. We used the \textbf{20 Newsgroups} dataset,\footnote{http://qwone.com/~jason/20Newsgroups} which consists of $\approx 18$K documents each belonging to one of 20 categories. 

We again compared our anchor group approach against the anchor facet approach of~\citet{lund2017tandem} and the constraint-based approach of~\citet{hu2014interactive}. We ran the anchor finding algorithm of~\citet{arora2013practical} to generate 500 candidate anchor words. For the interactive anchor-based approaches, we calculated
\[ g(a, c) = \frac{\# \text{ times } a \text{ occurs in document with label } c}{\# \text{ times } a \text{ occurs in corpus}}\]
for each anchor word $a$ and each news group category $c$; and for each category $c$, we selected the 10 anchors words $a$ with the highest $g(a,c)$ value.

For the constraint-based approach, we calculated $g(a, c)$ for all words $a$, not just anchor words, and selected the 10 words $a$ with the highest $g(a,c)$ value for each category $c$. For the resulting grouping, we generated all of the corresponding SPLIT and MERGE constraints.

To evaluate the quality of the competing topic models we looked at the local neighborhood structure of the resulting document representations using a k-nearest neighbor (k-NN) classifier. For a given topic model with $m$ topics, we embedded the documents into the $m$-dimensional probability simplex using LDA~\cite{GS04}. We then computed the leave-one-out cross-validation (LOOCV) accuracy of the k-NN classifier over a sample of 2K embedded documents.
Table~\ref{tab:nearest neighbor results} presents the performances of the resulting k-NN's for varying values of k on several interactive and non-interactive methods. All interactive methods had 20 topics (one for each news group category), whereas the number of topics varied for the non-interactive ones. 

\begin{table}
\centering
\begin{tabular}{lrrrrr}
  \hline
Model & k = 10 & k = 20 & k = 50 & k = 100  \\ 
  \hline
{\sc full 20} & 0.330 & 0.324 & 0.309 & 0.273 \\ 
  {\sc partial 20} & \textbf{0.337} & \textbf{0.337} & \textbf{0.321} & \textbf{0.287} \\ 
  {\sc Lund et al. 20} & 0.236 & 0.223 & 0.197 & 0.173 \\ 
  {\sc Hu et al. 20} & 0.221 & 0.212 & 0.196 & 0.178 \\ 
   \hline
  {\sc all 500} & 0.218 & 0.193 & 0.155 & 0.126 \\ 
  {\sc select 200} & 0.228 & 0.199 & 0.158 & 0.130 \\ 
  {\sc vanilla 20} & 0.144 & 0.140 & 0.133 & 0.121 \\   \hline
\end{tabular}
\caption{Accuracy of k-NN classifiers under the various topic recovery algorithms. {\sc all 500} refers to the topic model using all 500 anchor words. {\sc select 200} refers to the topic model using the 200 chosen anchor words. {\sc vanilla 20} refers to a model with 20 topics. \label{tab:nearest neighbor results}}
\end{table}

We observed that for all values of k, our interactive algorithms ({\sc Full} and {\sc partial}) outperformed all other interactive and non-interactive approaches.

\section{User study}\label{sec:user_experiments}
We conducted a small-scale user study to evaluate the anchor group interactive algorithm. Five users were asked to create their own topic model based on a corpus of recent news articles. All users were doctoral students in computer science, three of whom had past experience with topic modeling. 

\subsection{Data collection and preprocessing}
We used a collection of news articles crawled from the \textbf{CNN} website as its corpus; it was provided to us by a commercial search engine. The corpus contained about 10K articles, starting from around April 2016 and spanning about year. The articles covered a diverse range of subjects including politics, economy, sports, technology, science, and law. It also spanned several notable events such as the 2016 U.S. presidential debates and election, the 2016 Olympics games, and the Brexit referendum. It is also worth noting that since the dataset was created by a crawler, some articles contain boilerplate content such as advertisements and links to other irrelevant articles, which we did not take any steps to remove. We also did not perform any stemming. We only removed stop words and kept words that occurred in at least 10 documents. The final vocabulary contained about 17K words. After running an anchor word algorithm~\cite{recht2012factoring}, we had a list of about 500 anchor words as the basis of our interactive interface.


\subsection{Interactive process}
User feedback was collected via a web-based interface. At the beginning, users were prompted to select an element from the list of anchor words. After a word was selected, the user was taken to a separate screen where they created a topic by grouping words they felt were similar enough to the originally selected word. This component of the interface made topic creation more efficient by reducing the number of anchor words a user scanned to create a group. This process was repeated until the user felt they had exhausted the list of all salient groups.

Before starting the process, users were given some brief information about the dataset and then asked to create topics that would best summarize it. 

\begin{figure*}
    \begin{center}
	\includegraphics[width=0.7\textwidth]{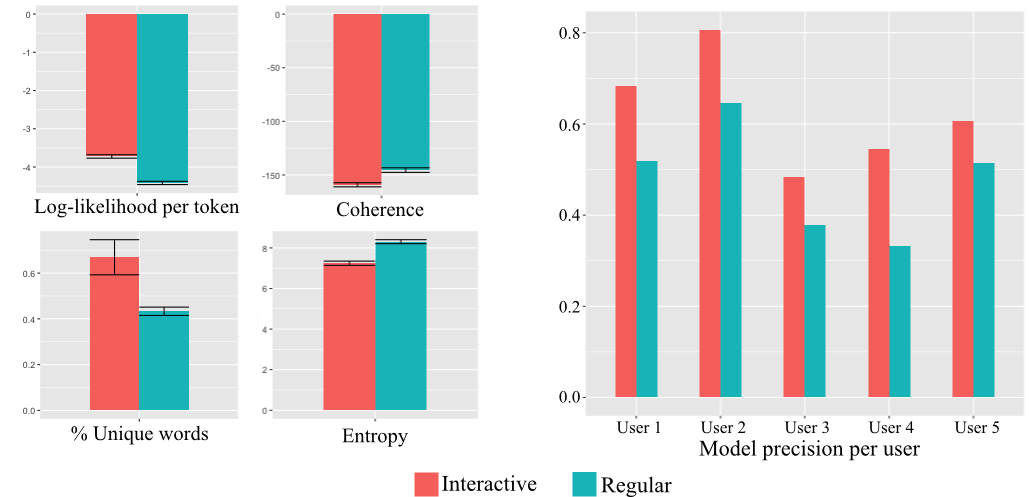}
	\end{center}
	\vspace{-1em}
    \caption{Comparison of models built using user interaction with their non-interactive counterparts. \emph{Left}: Log-likelihood per token, coherence, \% unique words, and average entropy of topics. \emph{Right}: Per-user performance on word intrusion task. Users were tested on all user-created topics, not just their own. \label{fig:result}}
\end{figure*}

\subsection{Results}
Using the interactive process described above, users created groupings of anchor words. Some examples of these groupings are given in Table~\ref{tab:groupings}.

\begin{table}[ht]
\centering
\small
\begin{tabular}{rl}
\hline
1& russian putin intelligence agencies \\ \hline
2& olympics rio olympic athletes brazil sport winner \\ \hline
 3& hollywood movie entertainment star film \\    
    & character original awards controversy \\ \hline
 4& joe politics vice rubio cruz kasich ballot  \\
    & campaigns convention voting poll delegates \\ 
    & elections pennsylvania \\ \hline
 5& israel peace region council terrorist terror isis \\
   & suicide iraqi falluja iraq troops syrian syria \\
   & aleppo refugees turkey \\ 
   \hline
\end{tabular}
\caption{Examples of user anchor groupings.\label{tab:groupings}}
\end{table}

After collecting the feedback that users provided, we used the partial interactive recovery algorithm of Section~\ref{sec:protocol} with the \textsc{RecoverL2} of~\cite{arora2013practical} to learn a topic model for each user. We call models created by user interactions \textbf{Interactive}. For each user, we also learned a topic model with the same number of topics without any interaction using Algorithm 1 from \cite{arora2013practical}. We call these models \textbf{Regular}. 

\subsubsection{Qualitative assessment of topics}
Shaded rows of Table~\ref{tab:topics} give the most probable words under topics learned using the user feedback of Table~\ref{tab:groupings}. Unshaded rows show the most probable words under the topic of the regular method that was closest in $\ell_{1}$ distance to the one above it. 

\bgroup
\setlength\tabcolsep{0.2em}
\begin{table}[ht]
\centering
\small
\begin{tabular}{lll}
  \hline
 \rowcolor{light-gray}1 & \textbf{Interactive} & russian putin russia intelligence obama \\ 
 &\textbf{Regular}  & obama president trump clinton visits \\ \hline
 \rowcolor{light-gray}2& \textbf{Interactive} & rio olympic olympics games athletes \\ 
 & \textbf{Regular}  & minister prime company million published \\ \hline
 \rowcolor{light-gray} 3& \textbf{Interactive} & film star show awards disney \\ 
 &\textbf{Regular}  & trump comedy show company million \\ \hline
\rowcolor{light-gray} 4& \textbf{Interactive} & cruz kasich president clinton convention \\ 
 & \textbf{Regular}  & trump clinton donald campaign trumps \\ \hline
\rowcolor{light-gray} 5 & \textbf{Interactive} & falluja isis battle syrian forces \\ 
 & \textbf{Regular}  & attacks brussels terror airport police \\
   \hline
\end{tabular}
\caption{Most probable words for the user created topics shown in Table~\ref{tab:groupings}.\label{tab:topics}}
\end{table}
\egroup

Across the board, the interactive method resulted in better quality topics that seemed to align with the intentions of the user that created them. Moreover, interactive topics seemed more easily interpretable and more general than the topics of the regular method. 

\subsubsection{Word intrusion user evaluations}
As noted in the introduction, a popular way to understand the gist of a topic is to look at its $n$ most probable words and try to find their common theme. \emph{Word intrusion} seeks to quantify how easily one can interpret a topic model in this way~\cite{chang2009reading}. Roughly, for each topic, its list of $n$ most probable words will be \emph{intruded} by a word that is in the $n$ most probable words of another topic. Humans are asked to find the \emph{intruding} words and models are then scored according the \% of intruding words found by humans. One would expect that in a semantically coherent list of words, intruding words will be more easily detected.

To measure word intrusion, each user that participated in the study was asked to evaluate a mix of their own and of other users' topics, as well as the topics of the regular method. The number of words that were shown was $n=10$. Figure~\ref{fig:result} (right) shows the results of this experiment. Across the board, users performed better on the word intrusion when they were evaluating an interactive topic as opposed to one found by purely unsupervised methods, even when those interactive topics were created by other users.

\subsubsection{Quantitative metrics}
We also compared the two methods across different metrics. We looked at log-likelihood, semantic coherence, which was introduced by~\citet{mimno2011optimizing}, proportion of unique most probable words, and entropy. To calculate log-likelihood, we ran 100 iterations of the Gibbs sampler while keeping the topics of each method fixed. Figure~\ref{fig:result} (left) shows the different metrics. Averaged across users, the interactive method has slightly higher per token log-likelihood but slightly worse topic coherence at the top $n=10$ words. Also, the interactive method has more unique most probable words per topic (again for $n=10$), indicating models that capture topics that are different from each other. Finally, the interactive method has lower entropy, indicating that on average, its topics concentrate on a smaller number of words than the regular method.   

\bibliographystyle{icml2019}

\end{document}